\newtheorem{theorem}{Theorem}
\newtheorem{proposition}{Proposition}
\newcommand{\R}{\mathbb{R}}
\newcommand{\ra}{\rightarrow}
\newcommand{\rn}{\{0,1\}}
\newcommand\ind[1]{\mathbbm{1}_{#1}}
\DeclareMathOperator{\polylog}{polylog}
\DeclareOldFontCommand{\sc}{\normalfont\scshape}{\mathsc}
\let\bra\relax % quantikz
\let\ket\relax
\DeclarePairedDelimiter{\pt}()
\DeclarePairedDelimiter{\bc}[]
\DeclarePairedDelimiter{\set}{\{}{\}}
\DeclarePairedDelimiter{\abs}{\lvert}{\rvert}
\DeclarePairedDelimiter{\ket}{|}{\rangle}
\DeclarePairedDelimiter{\bra}{\langle}{|}
\newcommand{\bo}{O\pt}
\newcommand{\om}{\Omega\pt}
\newcommand{\ta}{\Theta\pt}
\newcommand{\qub}{\ket{\vec{0}}}
\newcommand{\proc}[1]{\textup{\textsf{#1}}}
\newcommand{\ora}{\mathcal{O}}
\newcommand{\ci}{\mathcal{C}}
\newcommand{\di}{\mathcal{D}}
\newcommand{\wn}{\mathcal{Z}}
\newcommand{\wnorm}{\mathcal{W}} % \norm{w}_1
\newcommand{\bw}{w}
\newcommand{\ketw}{\ket{w}}
\newcommand{\ketwp}{\ket{w^\perp}}
\newcommand{\hi}{h} % m
\newcommand{\hiS}{H} % M
\newcommand{\rotg}{\mathrm{Rot}}
\newcommand{\qry}{\proc{qry}}
\newcommand{\rot}{\proc{rot}}
\newcommand{\otp}{\proc{out}}
\newcommand{\indi}{\proc{ind}}
\begin{document}

\title{Preparing many copies of a quantum state in the black-box model}
\author{Yassine Hamoudi}
\affiliation{Simons Institute for the Theory of Computing, Berkeley, California 94720, USA}
\email{hamoudi@berkeley.edu}
\date{\today}

\begin{abstract}
  We describe a simple quantum algorithm for preparing $K$ copies of an $N$-dimensional quantum state whose amplitudes are given by a quantum oracle. Our result extends a previous work of Grover, who showed how to prepare one copy in time $\bo{\sqrt{N}}$. In comparison with the naive $\bo{K\sqrt{N}}$ solution obtained by repeating this procedure~$K$ times, our algorithm achieves the optimal running time of $\ta{\sqrt{KN}}$. Our technique uses a refinement of the quantum rejection sampling method employed by Grover. As a direct application, we obtain a similar speed-up for obtaining $K$ independent samples from a distribution whose probability vector is given by a quantum oracle.
\end{abstract}

\maketitle

%####################################################################################################
%####################################################################################################

\section{Introduction}

The preparation of a specific quantum state is an important building block and a critical bottleneck in many quantum algorithms~\cite{MNRS11j,HHL09j,Ber14j,Aar15j}. The objective of the {\sc State Preparation} problem is to find the minimum amount of resources needed to generate a quantum state given some description of it. In general, the complexity of this problem scales linearly with the dimension of the state to be prepared~\cite{MVBS05j,PB11j}. Yet, it is possible to achieve sublinear bounds for particular states or input models. One such example is the \emph{black-box model} where, given oracle access to a non-zero non-negative vector $\bw = (w_1,\dots,w_N)$, the objective is to load the associated normalized probability vector into the amplitudes of the $\log(N)$-qubit state~$\ketw$ defined as
  \begin{equation}
    \ketw := \frac{1}{\sqrt{\wnorm}} \sum_{i=1}^N \sqrt{w_i} \ket{i}
  \end{equation}
where $\wnorm = \sum_{i=1}^N w_i$ is the (unknown) normalization factor. Grover adapted his celebrated quantum search algorithm to this problem in~\cite{Gro00j}, where he showed that $\bo{\sqrt{N}}$ queries to $\bw$ are sufficient to prepare $\ketw$. In practice, one can expect that \emph{several} copies of the same quantum state are needed for further use. For instance,~$\ketw$ may be fed in an algorithm that fails with some probability and that must be repeated several times. The no-cloning theorem prevents the state $\ketw$ from being easily duplicated. In fact, it is easy to show that additional queries to the input are required to prepare several copies of $\ketw$. The problem of adapting the state preparation procedure to the desired number $K$ of copies has received little attention. Usually, it is possible to simply repeat $K$ times the procedure used to prepare one copy, but the complexity grows linearly with $K$. For instance, the algorithm of Grover leads to a query complexity of $\bo{K \sqrt{N}}$ for preparing the $K$-fold state~$\ketw^{\otimes K}$. In this paper, we investigate the question of whether a more efficient approach exists. We describe a two-phase algorithm consisting of a preprocessing step that uses $\bo{\sqrt{KN}}$ queries, after which each copy of~$\ketw$ requires only $\bo{\sqrt{N/K}}$ queries to be prepared. Our result improves upon the previous approach by a factor of $\sqrt{K}$, and it is shown to be optimal.

%––––––––––––––––––––––––––––––––––––––––––––––––––––––––––––––––––––––––––––––––––––

\subsection{Related work}

Our work is based on the \emph{quantum rejection sampling} method, where a state that is easy to prepare (e.g.\ a uniform superposition) is mapped to a target state by amplitude amplification. This method was pioneered by Grover in~\cite{Gro00j} and subsequently studied in~\cite{ORR13j,SLSB19j,LYC14j,WG17j}. All of these works (except for~\cite{LYC14j}) take place in the quantum oracle model and they often require a number of queries that is polynomial in the dimension $N$ of the state. In the non-oracular setting, the problem of loading an arbitrary vector $(w_1,\dots,w_N)$ into the amplitudes of a quantum state can be done with a circuit of depth~$\bo{N}$ and width $\bo{\log N}$~\cite{MVBS05j,PB11j}. It is possible to use only $\polylog(N)$ resources for specific cases such as efficiently integrable probability distributions~\cite{Zal98j,GR02p,KM01c} (Proposition~\ref{Prop:StateInt}), uniformly bounded amplitudes~\cite{SS06j}, Gaussian states~\cite{KW09p} or probability distributions resulting from a Bayesian network~\cite{LYC14j}. A different line of work~\cite{AT07j,SBBK08j,WA08j,OBD18j,Ape19c,HW20c} studied the preparation of a quantum state that corresponds to the stationary distribution of a Markov chain. These algorithms use Markov chain Monte Carlo methods and quantum walk techniques to obtain a preparation time scaling with the spectral gap. Aharonov and Ta-Shma~\cite{AT07j} also showed that the existence of an efficient procedure to convert any circuit into a coherent state encoding the output distribution of that circuit would imply that $\proc{SZK} \subseteq \proc{BQP}$.

The {\sc State Preparation} problem is also related to the task of preparing samples from a discrete distribution. We refer the reader to~\cite{Dev86b,BFS87b,Knu98b} for a general introduction on the latter topic. In particular, the {\sc Importance Sampling} problem (also called {\sc Weighted Sampling} or {\sc $L_1$ Sampling}) asks to generate~$K$ independent samples from the probability vector $\pt[\big]{\frac{w_1}{\wnorm},\dots,\frac{w_N}{\wnorm}}$ associated with a non-negative weight vector $w$. The \emph{alias method}~\cite{Wal74j,Wal77j,KP79j,Vos91j} solves this problem with a preprocessing cost of~$\bo{N}$ operations, and a generating cost of~$\bo{1}$ operations per sample. Grover~\cite{Gro00c,Gro00j} suggested a quadratically faster algorithm for obtaining one sample, based on preparing the state $\ketw$ and measuring it in the computational basis. Our state preparation algorithm extends the work of Grover to the case of generating $K$ independent samples with a total cost of $\bo{\sqrt{KN}}$ operations. An alternative quantum algorithm for (approximately) generating $K$ such samples was proposed before in~\cite{HRRS19j}, where it was combined with the stochastic gradient descent method to address the submodular function minimization problem.

%––––––––––––––––––––––––––––––––––––––––––––––––––––––––––––––––––––––––––––––––––––

\subsection{Overview}

The two parts of our state preparation algorithm are described in Theorems~\ref{Thm:StatePreprop} and~\ref{Thm:StatePrepa}. Combined together, these results lead to the following main theorem.

\begin{theorem}
  There is a quantum algorithm with the following properties. Consider two integers $1 \leq K \leq N$, a real $\delta \in (0,1)$ and a non-zero vector~$\bw \in \R^N_{\geq 0}$. Then, with probability at least~$1-\delta$, the algorithm outputs $K$ copies of the state $\ketw$ and it uses $\bo{\sqrt{KN} \log(1/\delta)}$ queries to~$\bw$ in expectation.
\end{theorem}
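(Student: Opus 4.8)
The plan is to compose the two ingredients of the algorithm, Theorem~\ref{Thm:StatePreprop} and Theorem~\ref{Thm:StatePrepa}, and then amplify the success probability to $1-\delta$. First I would run the preprocessing routine of Theorem~\ref{Thm:StatePreprop}, which costs $\bo{\sqrt{KN}}$ queries and produces the auxiliary data (classical estimates and/or a reusable state) on which the second phase depends. I would then invoke the per-copy routine of Theorem~\ref{Thm:StatePrepa} a total of $K$ times; since each invocation costs $\bo{\sqrt{N/K}}$ queries, the whole generating phase costs $K \cdot \bo{\sqrt{N/K}} = \bo{\sqrt{KN}}$ queries. Adding the two contributions gives a base cost of $\bo{\sqrt{KN}}$, matching the target up to the $\log(1/\delta)$ factor that the error reduction will introduce.

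The $\log(1/\delta)$ factor and the $1-\delta$ guarantee come from treating the two phases differently. Each copy should be produced by a Las Vegas procedure: the rejection-sampling step underlying Theorem~\ref{Thm:StatePrepa} carries a flag register whose measurement certifies whether the output equals $\ketw$ exactly, so I would measure the flag and retry on failure. With constant per-attempt success probability this yields an \emph{exact} copy after $\bo{1}$ attempts in expectation, so the $K$ copies contribute no error and only $\bo{\sqrt{KN}}$ queries in expectation. This Las Vegas framing is essential: a Monte Carlo preparation with a small per-copy error would force that error below $\delta/K$ by a union bound over the copies, incurring a $\log(K/\delta)$ rather than a $\log(1/\delta)$ overhead. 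The preprocessing, by contrast, is the expensive primitive and cannot be retried cheaply; to drive its failure probability below $\delta$ I would repeat it $\bo{\log(1/\delta)}$ times and combine the outcomes by the usual median trick, at a cost of $\bo{\sqrt{KN}\log(1/\delta)}$ queries. The only source of error is then a faulty preprocessing, which occurs with probability at most $\delta$, and the expected total is $\bo{\sqrt{KN}\log(1/\delta)} + \bo{\sqrt{KN}} = \bo{\sqrt{KN}\log(1/\delta)}$.

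The main obstacle I anticipate is the bookkeeping that makes the \emph{expected} query bound rigorous, rather than the composition itself. Two points need care. First, the preprocessing output must be a resource that all $K$ copies can share without being consumed: if it is a quantum state it must be reused coherently (or be cheaply re-preparable from stored classical data), since a destructive measurement would force a full re-run per copy and destroy the speed-up. Second, I must control the expected cost even in the event that preprocessing fails, where the retries of Theorem~\ref{Thm:StatePrepa} could in principle loop for a long time; the clean fix is to condition on the (probability $\geq 1-\delta$) event of good preprocessing for the correctness claim, and to cap the number of retries so that the rare bad event contributes only a lower-order term to the expectation. With these two precautions the union over the $K$ copies adds nothing to the failure probability, and the stated bounds follow.
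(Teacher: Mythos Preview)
Your approach matches the paper's: run the preprocessing of Theorem~\ref{Thm:StatePreprop} once and then invoke the per-copy amplitude-amplification routine of Theorem~\ref{Thm:StatePrepa} a total of $K$ times, for $\bo{\sqrt{KN}\log(1/\delta)} + K\cdot\bo{\sqrt{N/K}} = \bo{\sqrt{KN}\log(1/\delta)}$ queries in expectation. The only simplification is that the $\log(1/\delta)$ factor is already built into Theorem~\ref{Thm:StatePreprop} via the top-$K$ maximum finding subroutine (Proposition~\ref{Prop:topK}), so you invoke the preprocessing once with parameter~$\delta$ rather than repeating it with a median trick; moreover its output is a classical circuit description, so your reusability concern does not arise.
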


We now provide a high-level description of how the algorithm works. Our starting point is the result from Grover~\cite{Gro00j} for preparing one copy of the state $\ketw$ in the black-box model. Given an upper bound $\hi$ on the largest value $\max_i w_i$, this algorithm uses two queries and one controlled rotation to implement a unitary $U$ such that,
  \begin{equation}
    U \qub = \frac{1}{\sqrt{N}} \sum_{i=1}^N \ket{i} \pt[\bigg]{\sqrt{\frac{w_i}{\hi}} \ket{0} + \sqrt{1 - \frac{w_i}{\hi}} \ket{1}}.
  \end{equation}
The state $\ketw\ket{0}$ has amplitude $\sqrt{\frac{\wnorm}{N \hi}}$ in $U \qub$, thus it can be extracted by using the amplitude amplification algorithm with $\bo[\Big]{\sqrt{\frac{N \hi}{\wnorm}}}$ applications of $U$ and $U^{-1}$. In particular, if the largest coordinate of $\bw$ is smaller than $\hi = \wnorm / K$ then we can prepare one copy of $\ketw$ in time $\bo{\sqrt{N/K}}$, and $K$ copies in time $\bo{\sqrt{KN}}$. We use this observation to construct a new circuit~$\ci$ (Figure~\ref{Fig:stateCirc}) such that $\ketw \ket{0}$ has amplitude at least~$\sqrt{K/N}$ in~$\ci \qub$, \emph{even} if~$\bw$ contains large coordinates. This circuit uses only two queries to $\bw$, but it requires $\bo{\sqrt{KN}}$ queries to be constructed during a preprocessing phase that is executed only once (Theorem~\ref{Thm:StatePreprop}). The preprocessing phase consists first of computing the set $H$ that contains the positions of the~$K$ largest coordinates in~$\bw$, by using a variant of the quantum maximum finding algorithm (Proposition~\ref{Prop:topK}). The circuit $\ci$ is then defined to proceed in two stages. First, it prepares a state whose amplitudes depend only on the values in $\set{w_i : i \in H}$ (Equation~(\ref{Eq:stateFirst})). Next, it modifies this state by querying the set $\set{w_i : i \notin H}$ in a way that is similar to that of $U$. The crucial observation is that the values in $\set{w_i : i \notin H}$ must be smaller than $\wnorm / K$ by definition of~$H$, thus they can be amplified at a smaller cost. Finally, each copy of~$\ketw$ can be obtained by one application of the amplitude amplification algorithm on $\ci$ (Theorem~\ref{Thm:StatePrepa}).

We show in the next proposition that our algorithm is optimal by a simple reduction from the {\sc $K$-Search} problem.

\begin{proposition}
  \label{Prop:lowerSampl}
  Any bounded-error quantum algorithm that can output $K$ copies of the quantum state $\ketw$ given oracle access to any non-zero vector $\bw \in \R^N_{\geq 0}$ must perform at least $\om[\big]{\sqrt{KN}}$ quantum queries to $\bw$.
\end{proposition}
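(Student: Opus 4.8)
The plan is to exhibit a query-preserving reduction from the {\sc $K$-Search} problem, whose quantum query complexity is known to be $\om[\big]{\sqrt{KN}}$. Recall that in {\sc $K$-Search} one is given oracle access to a string $x \in \rn^N$ promised to contain at least $K$ ones, and the goal is to output $K$ distinct indices $i$ with $x_i = 1$. First I would instantiate the weight vector as $\bw = x$, so that $w_i \in \rn$; this is a valid non-zero vector in $\R^N_{\geq 0}$, and the associated state is the uniform superposition $\ketw = \frac{1}{\sqrt{\wnorm}} \sum_{i : x_i = 1} \ket{i}$ over the marked positions, where $\wnorm$ equals the number of ones. For the hard instances I would take $x$ to have exactly $K$ ones, so that $\ketw$ is uniform over a set $S$ of size $K$.

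Next, suppose $A$ is a bounded-error algorithm preparing $K$ copies of $\ketw$ with $Q$ queries. I would run $A$ and measure each of the $K$ output registers in the computational basis. Conditioned on $A$ succeeding, this produces $K$ independent uniform samples from $S$, each of which is a valid solution (a marked position), and if the output is merely close to $\ketw^{\otimes K}$ in trace distance the induced outcome distribution is close in total variation to this ideal one. Measurement uses no further queries, so the whole procedure still costs $Q$ queries. The only loss is that the samples are drawn with replacement: from $K$ i.i.d.\ uniform draws over a set of size $K$, the expected number of distinct elements is $K\pt[\big]{1 - (1-1/K)^K} \approx (1 - 1/e)K$, and by a bounded-differences concentration argument (each sample changes the number of distinct values by at most one) this count is at least $K/2$ with high probability. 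Hence with high probability the measurements reveal $K/2$ distinct marked positions.

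This solves {\sc $(K/2)$-Search} on an instance with $K$ ones, which still requires $\om[\big]{\sqrt{KN}}$ queries; composing the bounded-error guarantee of $A$ with the success probability of the sampling step shows that $Q = \om[\big]{\sqrt{KN}}$. The main obstacle to make rigorous is the sampling-with-replacement step: one must verify that measuring $K$ copies yields enough \emph{distinct} solutions to invoke the search lower bound, and that the constant-factor loss from $K$ to $K/2$ is harmless since $\sqrt{(K/2)N} = \ta{\sqrt{KN}}$. The remaining ingredient --- the $\om[\big]{\sqrt{KN}}$ lower bound for {\sc $K$-Search} itself --- I would take as a known result, provable by the adversary or hybrid method, or by a direct-sum over $K$ disjoint search problems of size $N/K$ giving $\om[\big]{K\sqrt{N/K}} = \om[\big]{\sqrt{KN}}$.
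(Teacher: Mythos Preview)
Your proof is correct and follows essentially the same reduction as the paper: set $\bw = x$ for a \textsc{$K$-Search} instance, measure the $K$ copies in the computational basis, and argue that enough distinct marked positions are recovered to contradict the $\om{\sqrt{KN}}$ search lower bound. The only cosmetic difference is in the parameters---the paper takes inputs with at least $2K$ ones and invokes a coupon-collector argument to recover $K$ distinct positions, whereas you take inputs with exactly $K$ ones and use a bounded-differences concentration bound to recover $K/2$---but the idea is identical.
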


\begin{proof}
  We consider a variant of the {\sc $K$-Search} problem where the objective is to find $K$ preimages of $1$ in an oracle $\ora : [N] \ra \rn$ containing at least $2K$ such preimages. The bounded-error quantum query complexity of this problem is $\ta{\sqrt{KN}}$ (the proof can easily be derived from~\cite[Appendix A]{HM21c} for instance). On the other hand, by a coupon collector argument~\cite{MiscCoupon17}, if we prepare and measure in the computational basis $\ta{K}$ copies of $\ketw$ where $\bw = \pt{\ora(1),\dots,\ora(N)} \in \rn^N$, then we obtain the positions of at least~$K$ different preimages of~$1$ with constant success probability. It implies that generating~$\om{K}$ such copies requires using at least $\om{\sqrt{KN}}$ quantum queries to~$\bw$.
\end{proof}

%####################################################################################################
%####################################################################################################

\section{Preliminaries}

\subsection{Computational model}
\label{Sec:modelState}

We use the quantum circuit model over a universal gate set made of the CNOT gate and of all one-qubit gates. We suppose that the real numbers manipulated by our algorithms (such as the coordinates of $\bw$) can be encoded over~$c$ bits, for a fixed value of~$c$. In particular, these numbers can be stored in quantum registers of size~$c$. We also add the three gates described in Figure~\ref{Fig:BasicGates}. The indicator gate~$\ind{\hiS}$ is specified by a subset $\hiS \subseteq [N]$. It operates on a Boolean value $b$ and an index $i \in [N]$. The Boolean value $\bc{i \notin \hiS}$ is equal to~$1$ if and only if $i \notin \hiS$. The query gate~$\ora_{\bw}$ is specified by the input vector~$\bw$ to the problem. It operates on an index $i \in [N]$ and a real~$v$ (encoded over~$c$ bits). Finally, the controlled rotation gate $\rotg_{\hi}$ is specified by a real $\hi > 0$ and it operates on a Boolean value $b$ and a real $v$. We refer the reader to~\cite{HRS18p,SLSB19j} and references therein for efficient implementations of the arithmetic gates and controlled rotation gates with a given precision. We will also use~$\qub$ in our notations to represent a multi-qubit basis state~$\ket{0}^{\otimes \ell}$ for some $\ell > 1$.

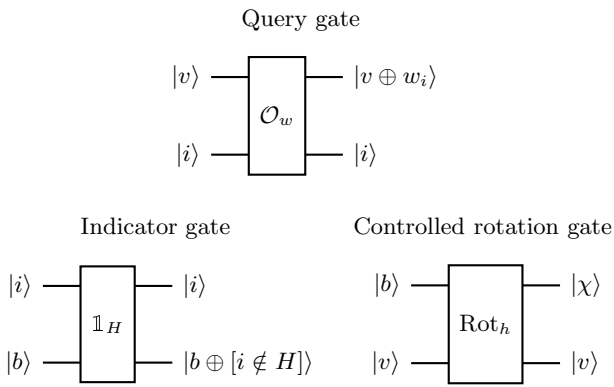
\begin{figure}[htb] %[H]
  \hspace{-5mm}
  \sbox0{\begin{quantikz}
    \lstick{$\ket{i}$} & \gate[wires=2]{\ind{\hiS}} & \qw \rstick{$\ket{i}$} \\
    \lstick{$\ket{b}$} &  & \qw \rstick{$\ket{b \oplus \bc{i \notin \hiS}}$}
  \end{quantikz}}%
  \sbox1{\begin{quantikz}
    \lstick{$\ket{v}$} & \gate[wires=2]{\ora_{\bw}} & \qw \rstick{$\ket{v \oplus w_i}$} \\
    \lstick{$\ket{i}$} &  & \qw \rstick{$\ket{i}$}
  \end{quantikz}}%
  \sbox2{\begin{quantikz}
    \lstick{$\ket{b}$} & \gate[wires=2]{\rotg_{\hi}} & \qw \rstick{$\ket{\chi}$} \\
    \lstick{$\ket{v}$} &  & \qw \rstick{$\ket{v}$}
  \end{quantikz}}%
  \setlength{\tabcolsep}{6pt}
  \begin{tabular}{cc}
    \multicolumn{2}{c}{Query gate} \\[2mm]
    \multicolumn{2}{c}{\usebox1} \\[12mm]
    Indicator gate & Controlled rotation gate \\[2mm]
    \usebox0 & \usebox2
  \end{tabular}
  \caption{Three gates used in our circuits. The state $\ket{\chi}$ is defined as $\sqrt{\frac{v}{\hi}} \ket{b} + \sqrt{1-\frac{v}{\hi}}\ket{1-b}$ if $0 < v \leq \hi$, and $\ket{b}$ otherwise.}
  \label{Fig:BasicGates}
\end{figure}

The \emph{query complexity} of an algorithm is the number of times it uses the oracle gate~$\ora_w$ to access the input~$w$. If not specified, the total number of elementary gates used by our algorithms will be larger than their respective query complexity by at most a $\polylog(N)$ factor.

%––––––––––––––––––––––––––––––––––––––––––––––––––––––––––––––––––––––––––––––––––––

\subsection{Quantum subroutines}
\label{Sec:prelimState}

The next algorithm generalizes the quantum minimum finding of D{\"{u}}rr and H{\o}yer~\cite{DH96p} to finding $K$ largest entries in a vector $\bw$. The algorithm succeeds if it outputs a set $\hiS \subset [N]$ of $K$ coordinates that dominate all other entries. There is not necessarily a unique choice for $\hiS$ since different coordinates of $\bw$ may be equal.

\begin{proposition}[{\sc Top-$K$ maximum finding} -- Theorem~4.2 in \cite{DHHM06j}]
  \label{Prop:topK}
  There exists a quantum algorithm with the following properties. Consider two integers $1 \leq K \leq N$, a real $\delta \in (0,1)$ and a vector $\bw \in \R^N_{\geq 0}$. Then, the algorithm outputs the positions of $K$ largest entries in~$\bw$ with success probability at least $1-\delta$, and it performs $\bo{\sqrt{KN} \log(1/\delta)}$ queries to $\bw$.
\end{proposition}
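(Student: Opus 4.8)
The plan is to generalize the quantum minimum-finding algorithm of D\"{u}rr and H\o yer~\cite{DH96p} from a single extremal entry to the $K$ largest ones. Since negating the entries turns ``$K$ largest'' into ``$K$ smallest'', I will describe the procedure for the largest entries directly. The central primitive is the quantum search routine for an \emph{unknown} number of marked elements used in~\cite{DH96p}: given a predicate with $t \geq 1$ marked indices in $[N]$, it returns a (near-)uniformly random marked index in $\bo{\sqrt{N/t}}$ expected queries, and it certifies that no index is marked in $\bo{\sqrt{N}}$ queries when $t = 0$.

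The algorithm maintains a candidate set $\hiS \subseteq [N]$ of size $K$ together with its threshold $\tau = \min_{i \in \hiS} w_i$, the smallest retained weight. Each round searches for an index $j \notin \hiS$ with $w_j > \tau$; whenever such a $j$ is found it replaces an argmin of $\set{w_i : i \in \hiS}$ by $j$, which can only increase $\tau$. The loop halts once the search reports that no such index remains. For correctness I would note that $\tau$ is nondecreasing and that, at halting, no index outside $\hiS$ exceeds $\tau$; since the valid top-$K$ answers are exactly the $K$-subsets maximizing $\min_{i \in \hiS} w_i$, the returned set is one of them. Ties need no special treatment because the statement only asks for the positions of \emph{some} $K$ largest entries.

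The heart of the argument, and the step I expect to be the \emph{main obstacle}, is to show that the expected total number of queries is $\bo{\sqrt{KN}}$ rather than the $\bo{K\sqrt{N}}$ obtained by naively running minimum-finding $K$ times. Ordering the entries by decreasing value, suppose the threshold occupies rank $r \geq K$ in some round, so that about $r - K$ indices are marked and the round costs $\bo{\sqrt{N/(r-K)}}$. The threshold rank decreases monotonically toward $K$, and because each successful search returns a near-uniform marked index, the sequence of thresholds behaves like the record values of a random permutation; a backward ``records'' analysis as in~\cite{DH96p} bounds the probability that rank $r$ is ever selected, and the expected cost is then controlled by a sum of the form $\sum_{r > K} \frac{K}{r}\sqrt{N/(r-K)} = \bo{\sqrt{KN}}$. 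The delicate point, relative to the $K=1$ case, is that the threshold is the $K$-th order statistic of the candidate set rather than a single pivot, so the records argument must be carried out for this evolving order statistic; this is precisely where the work of~\cite{DHHM06j} lies.

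Finally, I would amplify the success probability. Capping a single run at $C\sqrt{KN}$ queries makes it succeed with constant probability by Markov's inequality applied to the expected cost. To reach $1-\delta$, I would run $\bo{\log(1/\delta)}$ independent copies under this cap and output the candidate set whose threshold $\min_{i \in \hiS} w_i$ is largest. Since the valid top-$K$ sets are exactly the maximizers of this threshold, the output is correct as soon as one run succeeds, so the failure probability decays geometrically. The total query count is then the claimed $\bo{\sqrt{KN}\log(1/\delta)}$, the extra $\bo{K\log(1/\delta)}$ queries used to read off and compare the retained weights being absorbed into this bound.
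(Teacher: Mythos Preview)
The paper does not give a proof of this proposition at all; it is quoted as Theorem~4.2 of~\cite{DHHM06j} and used as a black box. Your sketch is a faithful outline of how that result is obtained there: maintain a candidate $K$-set with threshold $\tau=\min_{i\in\hiS}w_i$, repeatedly Grover-search for an index beating $\tau$, swap it in for an argmin, and bound the expected total cost by a records-type argument over the rank of the evolving threshold. The sum you write down, $\sum_{r>K}\frac{K}{r}\sqrt{N/(r-K)}$, is indeed $\bo{\sqrt{KN}}$, and you are right that the nontrivial part is showing the $K$-th order statistic of the candidate set obeys a records bound analogous to the $K=1$ case.

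There is, however, a real slip in your amplification step. You assert that ``the valid top-$K$ sets are exactly the maximizers'' of $\min_{i\in\hiS}w_i$, so that picking the run with the largest threshold is correct whenever at least one run succeeded. This is false under ties: for $\bw=(3,2,2,1)$ and $K=2$, the subsets $\set{1,2}$, $\set{1,3}$ and $\set{2,3}$ all achieve the maximal threshold~$2$, yet $\set{2,3}$ is not a valid top-$2$ set since $w_1=3$ lies outside it. A capped run that starts at (or passes through) $\set{2,3}$ and exhausts its budget before the search discovers index~$1$ will return $\set{2,3}$, and your selection rule cannot distinguish it from a correct answer. The fix is cheap: after selecting the candidate $\hiS$ with the largest threshold, spend one additional $\bo{\sqrt{N}}$-query Grover search to certify that no $j\notin\hiS$ satisfies $w_j>\min_{i\in\hiS}w_i$, and move on to the next candidate if this check fails. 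This adds $\bo{\sqrt{N}\log(1/\delta)}$ queries overall and leaves the claimed bound intact.
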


We also need the well-known amplitude amplification algorithm. % that works with unknown amplitudes.

\begin{proposition}[{\sc Amplitude Amplification} -- Theorem 3 in \cite{BHMT02j}]
  \label{Prop:AA}
  Let $\ci$ be a quantum circuit that prepares the state $\ci \qub = \sqrt{p} \ket{\varphi}\ket{0} + \sqrt{1-p} \ket{\varphi^{\perp}}\ket{1}$ for some $p \in [0,1]$ and two unit states $\ket{\varphi}$, $\ket{\varphi^{\perp}}$. Then, the \emph{amplitude amplification} algorithm outputs the state $\ket{\varphi}$ by using~$\bo{1/\sqrt{p}}$ applications of $\ci$ and $\ci^\dagger$ in expectation.
\end{proposition}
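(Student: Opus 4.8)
The plan is to reconstruct the standard Grover-type analysis inside the two-dimensional subspace in which the entire computation remains. Writing $p = \sin^2\theta$ for some angle $\theta \in (0,\pi/2]$, the prepared state reads $\ci\qub = \sin\theta\,\ket{\varphi}\ket{0} + \cos\theta\,\ket{\varphi^\perp}\ket{1}$, so it lies in the plane $V$ spanned by the orthonormal ``good'' vector $\ket{\varphi}\ket{0}$ and ``bad'' vector $\ket{\varphi^\perp}\ket{1}$. First I would introduce the amplification iterate $Q = -\,\ci\,S_0\,\ci^{-1}\,S_\chi$, where $S_\chi$ is the reflection flipping the sign of every state whose flag qubit equals $\ket{0}$, and $S_0 = \id - 2\,\qub\bra{\vec{0}}$ reflects about the starting state. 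Both $S_\chi$ and $\ci\,S_0\,\ci^{-1}$ restrict to reflections of $V$, so their composition $Q$ preserves $V$ and acts on it as a rotation by angle $2\theta$. Iterating then yields the clean formula
\begin{equation}
  Q^m \ci \qub = \sin\pt[\big]{(2m+1)\theta}\,\ket{\varphi}\ket{0} + \cos\pt[\big]{(2m+1)\theta}\,\ket{\varphi^\perp}\ket{1},
\end{equation}
which I would verify by induction from the rotation picture.

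Next, if $p$ were known, I would take the integer $m$ nearest to $\frac{\pi}{4\theta} - \frac12$, making the good amplitude $\sin\pt[\big]{(2m+1)\theta}$ as close to $1$ as the discretization permits; a measurement of the flag qubit then returns $\ket{0}$ with high probability, after which the first register holds exactly $\ket{\varphi}$. Since $\theta \geq \sin\theta = \sqrt{p}$, this costs $m = \bo{1/\sqrt{p}}$ applications of $\ci$ and $\ci^{-1}$.

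The only genuine difficulty is that $p$ (hence $\theta$) is unknown, so the optimal $m$ cannot be computed in advance; this is where the ``in expectation'' bound enters. I would run the exponential-search schedule: maintain a threshold $M$ initialized to $1$ and increased geometrically (say $M \leftarrow \lceil c\,M\rceil$ for a fixed $c$ slightly above $1$) across rounds, and in each round pick $m$ uniformly at random in $\set{0,\dots,\lceil M\rceil -1}$, apply $Q^m$ to $\ci\qub$, and measure the flag, halting and returning the first register if the outcome is $\ket{0}$. The crux of the analysis is the averaging identity
\begin{equation}
  \frac{1}{\lceil M\rceil}\sum_{m=0}^{\lceil M\rceil-1}\sin^2\!\pt[\big]{(2m+1)\theta} = \frac12 - \frac{\sin\pt[\big]{4\lceil M\rceil\theta}}{4\lceil M\rceil\sin(2\theta)},
\end{equation}
whose second term has absolute value at most $1/4$ once $\lceil M\rceil \geq 1/\sin(2\theta)$. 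Hence every round with $M \gtrsim 1/\theta$ succeeds with probability at least a constant, so the expected number of rounds beyond that threshold is constant, and summing the geometric costs up to the first threshold exceeding $1/\theta$ gives a total of $\bo{1/\theta} = \bo{1/\sqrt{p}}$ applications in expectation. Finally, conditioned on success the output is exactly $\ket{\varphi}$, since the good subspace of $V$ is the single line $\ket{\varphi}\ket{0}$, which matches the stated guarantee.
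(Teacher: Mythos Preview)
Your argument is correct and is precisely the standard Brassard--H{\o}yer--Mosca--Tapp analysis: the two-dimensional invariant subspace, the identification of $Q$ as a rotation by $2\theta$, and the exponential-search schedule with the averaging identity for unknown $p$. Note, however, that the paper does not supply its own proof of this proposition at all; it simply cites it as Theorem~3 of~\cite{BHMT02j} and uses it as a black box, so there is nothing to compare against beyond observing that your sketch faithfully reproduces the cited reference.
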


Finally we will use the next quantum state preparation algorithm that requires having an efficient procedure to compute the partial sum $\sum_{\ell=i}^j w_\ell$ for any $1 \leq i \leq j \leq N$.

\begin{proposition}[\sc State preparation by integration -- \cite{Zal98j,KM01c,GR02p}]
  \label{Prop:StateInt}
  There is a quantum algorithm with the following properties. Consider an integer $N$ and a non-zero vector $\bw \in \R^N_{\geq 0}$ such that there is a (classical) reversible circuit with~$T$ gates that computes $\sum_{\ell=i}^j w_\ell$ given $i \leq j$. Then, the algorithm outputs $\ketw$ and it uses $\bo{T \log N}$ elementary gates.
\end{proposition}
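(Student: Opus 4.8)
The plan is to use the standard divide-and-conquer construction of Grover--Rudolph, building the amplitudes of $\ketw$ one qubit at a time. First I would pad $\bw$ with zero entries so that $N$ is a power of two, $N = 2^n$; this at worst doubles $N$, preserves contiguity of index ranges, and does not affect the claimed gate count. For a binary string $x \in \rn^k$ write $S_x$ for the total weight $\sum_\ell w_\ell$ over all indices $\ell \in [N]$ whose leading $k$ bits equal $x$, so that $S_\varepsilon = \wnorm$ for the empty string. The key structural observation is that the set of such indices is precisely the contiguous block $\set{x \cdot 2^{n-k}+1, \dots, (x+1)\cdot 2^{n-k}}$ (reading $x$ as an integer), so every $S_x$ is a partial sum $\sum_{\ell=i}^j w_\ell$ over a contiguous range and can therefore be evaluated by the given reversible circuit using $\bo{T}$ gates.

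The algorithm maintains, after $k$ steps, the $k$-qubit state $\frac{1}{\sqrt{\wnorm}}\sum_{x \in \rn^k} \sqrt{S_x}\ket{x}$, which for $k = n$ is exactly $\ketw$. The inductive step appends a fresh qubit initialized to $\ket{0}$ and, controlled on the prefix register holding $x$, performs a rotation sending $\ket{0} \mapsto \sqrt{S_{x0}/S_x}\,\ket{0} + \sqrt{S_{x1}/S_x}\,\ket{1}$, where $x0, x1 \in \rn^{k+1}$ denote the two children of $x$. Concretely I would (i) compute $S_{x0}$ and $S_x$ into ancilla registers with the partial-sum circuit, (ii) compute the rotation angle $\theta_x = \arccos\sqrt{S_{x0}/S_x}$ into another register by standard fixed-precision arithmetic, (iii) apply a rotation on the fresh qubit controlled by that register, and (iv) uncompute steps (i)--(ii) to disentangle the ancillas. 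Since $S_x = S_{x0}+S_{x1}$, this exactly multiplies the amplitude $\sqrt{S_x/\wnorm}$ of $\ket{x}$ into $\sqrt{S_{x0}/\wnorm}$ and $\sqrt{S_{x1}/\wnorm}$ on the two children, re-establishing the invariant at level $k+1$.

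For the cost, there are $n = \log N$ levels, each costing $\bo{T}$ gates for the two partial-sum evaluations and their uncomputation, plus a number of gates depending only on the fixed precision $c$ (hence $\bo{1}$) for the angle arithmetic and the controlled rotation; this gives $\bo{T \log N}$ elementary gates in total. The step I expect to require the most care is the uncomputation in (iv): one must guarantee that after each level the ancillas carrying $S_{x0}$, $S_x$, and $\theta_x$ are returned to $\ket{0}$ so that no garbage remains entangled with the prefix register, which is exactly what makes the coherent amplitudes combine correctly across levels. A secondary subtlety, glossed over by the exact phrasing ``outputs $\ketw$'', is the finite precision with which square roots and $\arccos$ can be evaluated on $c$-bit registers; the construction in fact prepares a state within inverse-polynomial trace distance of $\ketw$, and one should either accept this approximation or note that it suffices for the downstream application.
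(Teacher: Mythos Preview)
Your proposal is correct and is precisely the Grover--Rudolph construction that the cited references~\cite{Zal98j,KM01c,GR02p} use; the paper itself does not supply a proof of this proposition but simply imports it from the literature. Your identification of the uncomputation step and the finite-precision caveat as the two delicate points is accurate and matches how the cited works handle them.
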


%####################################################################################################
%####################################################################################################

\section{Main algorithm}
\label{Sec:algo}

We describe in details our state preparation algorithm for preparing $K$ copies of $\ketw$ given oracle access to $\bw = (w_1,\dots,w_N)$. The first step of the algorithm is a preprocessing phase (Algorithm~\ref{Alg:StatePreprop}) that constructs a particular circuit $\ci$ described in Figure~\ref{Fig:stateCirc}.

\begin{algorithm}[H]
  \caption{Preprocessing phase.}
    \label{Alg:StatePreprop}
  \begin{algorithmic}[1]
    \State Compute a set $\hiS \subseteq [N]$ of the positions of $K$ largest entries in $\bw$ by using the top-$K$ maximum finding algorithm (Proposition~\ref{Prop:topK}) with failure probability~$\delta$.
    \State Compute $\hi = \min_{i \in \hiS} w_i$ and
             \begin{equation}
               \wn = (N-K)\hi + \sum_{i \in \hiS} w_i.
             \end{equation}
    \State Use the state preparation algorithm of Proposition~\ref{Prop:StateInt} to construct a circuit~$\di$ such that, on input $\qub_{\otp}$, it prepares the state
         \begin{equation}
           \label{Eq:stateFirst}
           \di \qub_{\otp} = \sum_{i \in \hiS} \sqrt{\frac{w_i}{\wn}} \ket{i}_{\otp} + \sum_{i \notin \hiS} \sqrt{\frac{\hi}{\wn}} \ket{i}_{\otp}.
         \end{equation}
    \State Output the circuit $\ci$ represented in Figure~\ref{Fig:stateCirc}.
  \end{algorithmic}
\end{algorithm}

\begin{figure}[htb] %[H]
  \centering
  \begin{quantikz}[transparent]
    \lstick{$\ket{0}_{\rot}$} &[-2mm] \qw &[-2mm] \qw &[-2mm] \qw &[-2mm] \gate[2]{\rotg_{\hi}} &[-2mm] \qw &[-2mm] \qw &[-2mm] \qw \\[-0.3cm]
    \lstick{$\qub_{\qry}$} & \qw & \qw & \gate[wires=2]{\ora_{\bw}} & \qw & \gate[wires=2]{\ora_{\bw}} & \qw & \qw \\[-0.3cm]
    \lstick{$\qub_{\otp}$} & \gate[1][0.7cm][0.7cm]{\di} & \gate[wires=2]{\ind{\hiS}} & \qw & \qw & & \gate[wires=2]{\ind{\hiS}} & \qw \\[-0.3cm]
    \lstick{$\ket{0}_{\indi}$} & \qw &  & \qw & \ctrl{-2} & \qw & & \qw
  \end{quantikz}
  \caption{Circuit $\ci$ output at the end of the preprocessing phase (Algorithm~\ref{Alg:StatePreprop}).}
  \label{Fig:stateCirc}
\end{figure}
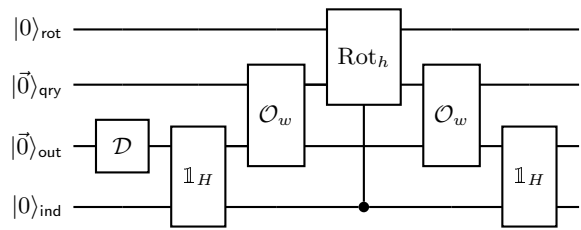

\begin{theorem}[\sc Preprocessing phase]
  \label{Thm:StatePreprop}
  Consider two integers $1 \leq K \leq N$, a real $\delta \in (0,1)$ and a non-zero vector $\bw \in \R^N_{\geq 0}$. Then, Algorithm~\ref{Alg:StatePreprop} outputs with probability at least $1-\delta$ the description of a quantum circuit~$\ci$ such that, on input $\qub$, it prepares the state
    \begin{equation}
      \ket{\psi} = \sqrt{p_{\bw}} \ketw \ket{0} + \sqrt{1 - p_{\bw}} \ketwp \ket{1}
    \end{equation}
  where $p_{\bw} \geq K/N$ and $\ketwp$ is some unit state. The algorithm performs $\bo{\sqrt{KN} \log(1/\delta)}$ queries to~$\bw$. The circuit $\ci$ performs two queries to $w$ and it uses $\bo{K \log N}$ elementary gates.
\end{theorem}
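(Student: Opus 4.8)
The plan is to split the statement into three independent claims and verify each against the circuit of Figure~\ref{Fig:stateCirc}: (i) on input $\qub$ the circuit $\ci$ outputs a state of the stated form whose good-part amplitude is $\sqrt{p_{\bw}}=\sqrt{\wnorm/\wn}$; (ii) this amplitude satisfies $p_{\bw}\geq K/N$; and (iii) the query and gate counts hold. I would prove (i) by propagating the computational-basis branches of $\di\qub_{\otp}$ (Equation~(\ref{Eq:stateFirst})) through the four columns of the circuit.

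Concretely, the indicator gate writes $\bc{i\notin\hiS}$ into the $\indi$ register, the first query gate loads $w_i$ into the $\qry$ register, and the controlled rotation $\rotg_{\hi}$ fires on the $\rot$ qubit only in the branches with $i\notin\hiS$. The key local fact is that for $i\notin\hiS$ one has $w_i\leq\hi$, since $\hiS$ holds the top-$K$ positions and $\hi=\min_{i\in\hiS}w_i$; hence $\rotg_{\hi}$ is well defined and turns the amplitude $\sqrt{\hi/\wn}$ into $\sqrt{\hi/\wn}\cdot\sqrt{w_i/\hi}=\sqrt{w_i/\wn}$ on $\ket{0}_{\rot}$ and $\sqrt{(\hi-w_i)/\wn}$ on $\ket{1}_{\rot}$, while for $i\in\hiS$ the rotation is skipped and the amplitude $\sqrt{w_i/\wn}$ already sits on $\ket{0}_{\rot}$. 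After the second query and indicator gates uncompute the $\qry$ and $\indi$ registers, the $\rot=0$ subspace carries $\sum_{i=1}^N\sqrt{w_i/\wn}\,\ket{i}_{\otp}=\sqrt{\wnorm/\wn}\,\ketw$, which identifies $p_{\bw}=\wnorm/\wn$ and lets me read off $\ketwp$ as the normalized $\rot=1$ part.

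For (ii) I would use that $\hi$ is the $K$-th largest entry: every $i\in\hiS$ has $w_i\geq\hi$, so $K\hi\leq\sum_{i\in\hiS}w_i\leq\wnorm$ and thus $\hi\leq\wnorm/K$. Substituting into $\wn=(N-K)\hi+\sum_{i\in\hiS}w_i$ yields $\wn\leq(N-K)\wnorm/K+\wnorm=N\wnorm/K$, whence $p_{\bw}=\wnorm/\wn\geq K/N$. For (iii), the only queries in the preprocessing are the $\bo{\sqrt{KN}\log(1/\delta)}$ queries of the top-$K$ subroutine (Proposition~\ref{Prop:topK}); reading off the values $\set{w_i:i\in\hiS}$ needed to form $\hi$ and $\wn$ costs $\bo{K}=\bo{\sqrt{KN}}$ further queries, which is absorbed. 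The circuit $\di$ is produced by Proposition~\ref{Prop:StateInt} from a partial-sum circuit for the vector equal to $w_i$ on $\hiS$ and to $\hi$ off $\hiS$; any range sum of this vector is $\hi$ times the number of off-$\hiS$ indices in the range plus the at most $K$ stored on-$\hiS$ values, so that circuit has $T=\bo{K}$ gates, giving $\bo{K\log N}$ gates for $\di$ and hence for $\ci$, alongside exactly the two query gates drawn in the figure.

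I expect the main obstacle to be the amplitude bookkeeping in (i): checking that the two branches — indices in $\hiS$ left untouched, and indices outside $\hiS$ amplified from the uniform padding value $\hi$ down to $w_i$ — recombine into a single clean $\ketw$ on the good subspace, and handling the boundary behaviour of $\rotg_{\hi}$ when some $w_i=0$. By contrast, the inequality in (ii) is a two-line estimate once $\hi\leq\wnorm/K$ is established, and the resource counts follow directly from Propositions~\ref{Prop:topK} and~\ref{Prop:StateInt}.
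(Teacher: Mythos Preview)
Your proposal is correct and follows essentially the same approach as the paper: you trace the branches of $\di\qub$ through the gates of Figure~\ref{Fig:stateCirc} to identify $p_{\bw}=\wnorm/\wn$, bound $\hi\leq\wnorm/K$ via $K\hi\leq\sum_{i\in\hiS}w_i\leq\wnorm$, and read off the resource counts from Propositions~\ref{Prop:topK} and~\ref{Prop:StateInt} exactly as the paper does. Your explicit note that $w_i\leq\hi$ for $i\notin\hiS$ (so that $\rotg_{\hi}$ is in its rotation regime) is a detail the paper leaves implicit.
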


\begin{proof}
  We assume that the top-$K$ maximum finding algorithm returns a correct set $\hiS$ at step~1 of Algorithm~\ref{Alg:StatePreprop}, which is the case with probability at least $1-\delta$. The circuit $\ci$ represented in Figure~\ref{Fig:stateCirc} operates on four registers: $\rot$ and $\indi$ that contain a Boolean value,~$\qry$ that contains a real $v \geq 0$, and~$\otp$ that contains an integer $i \in [N]$. The indicator gate $\ind{\hiS}$ flips the content of $\indi$ when the $\otp$ register contains $i \notin \hiS$, which allows the rotation gate $\rotg_{\hi}$ to be activated only when $i \notin \hiS$. The gates~$\ora_{\bw}$ and~$\ind{\hiS}$ are applied a second time at the end of $\ci$ to uncompute the registers~$\qry$ and~$\indi$. A simple calculation shows that the final state is $\ci \qub = \qub_{\qry}\ket{0}_{\indi}\ket{\psi}_{\otp,\rot}$ where
  \begin{equation}
    \begin{aligned}
      \ket{\psi}_{\otp,\rot}
        & = \sum_{i \in \hiS} \sqrt{\frac{w_i}{\wn}} \ket{i}_{\otp} \ket{0}_{\rot} \\
        & + \sum_{i \notin \hiS} \sqrt{\frac{\hi}{\wn}} \ket{i}_{\otp} \pt*{\sqrt{\frac{w_i}{\hi}} \ket{0}_{\rot} + \sqrt{1 - \frac{w_i}{\hi}} \ket{1}_{\rot}} \\
        & = \sqrt{\frac{\wnorm}{\wn}} \ketw_{\otp} \ket{0}_{\rot} + \sqrt{1 - \frac{\wnorm}{\wn}} \ketwp_{\otp} \ket{1}_{\rot}
    \end{aligned}
  \end{equation}
  for some irrelevant unit state $\ketwp_{\otp}$. In order to lower bound the coefficient $p_{\bw} := \frac{\wnorm}{\wn}$, we first observe that the smallest value $\hi = \min_{i \in \hiS} w_i$ over $\hiS$ must satisfy $\hi \leq \frac{\wnorm}{K}$ since otherwise $\sum_{i \in \hiS} w_i$ would exceed $\wnorm$. Thus, $p_{\bw}^{-1} = \frac{\wn}{\wnorm} = \frac{(N-K)\hi + \sum_{i \in \hiS} w_i}{\wnorm} \leq \frac{N-K}{K} + 1 = \frac{N}{K}$.

  The algorithm uses $\bo{\sqrt{KN} \log(1/\delta)}$ queries at step~1 by Proposition~\ref{Prop:topK}. The set $\set{\bw_i : i \in H}$ can be computed with $K$ queries, after which steps 2--4 do not need to perform any new query. For any $i \leq j$, the partial amplitude sum $\sum_{\ell = i}^j \bra{\ell} \di \qub^2$ is equal to $\frac{1}{\wn} \pt[\big]{\sum_{\ell \in \hiS \cap \set{i,\dots,j}} w_i + \hi \cdot \abs{\set{i,\dots,j} \setminus \hiS}}$, which can be computed by a classical circuit with $\bo{K}$ gates since~$\hiS$ is of size $K$. Thus, by Proposition~\ref{Prop:StateInt}, the circuit~$\di$ constructed at step 3 requires $\bo{K \log N}$ elementary gates. Finally, the number of gates needed to implement the circuit $\ci$ at step 4 is dominated by the number of gates needed in $\di$ since the other gates are included in the computational model (see Section~\ref{Sec:modelState}).
\end{proof}

We use the circuit $\ci$ constructed during the above preprocessing phase, together with the amplitude amplification algorithm, to obtain the next state preparation phase that generates one copy of~$\ketw$ in time $\bo{\sqrt{N/K}}$.

\begin{theorem}[\sc State preparation phase]
  \label{Thm:StatePrepa}
  Consider two integers $1 \leq K \leq N$, a real $\delta \in (0,1)$ and a non-zero vector $\bw \in \R^N_{\geq 0}$. Let $\ci$ denote a quantum circuit obtained with Algorithm~\ref{Alg:StatePreprop} on input $K$, $\delta$, $\bw$ that correctly prepares the state $\ket{\psi}$ described in Theorem~\ref{Thm:StatePreprop}. Then, given the description of $\ci$, one can prepare the state $\ketw$ by using $\bo{\sqrt{N/K}}$ queries to $\bw$ in expectation.
\end{theorem}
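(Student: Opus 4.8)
The plan is to invoke the amplitude amplification algorithm of Proposition~\ref{Prop:AA} directly on the circuit $\ci$ produced by the preprocessing phase. By hypothesis, $\ci$ correctly prepares $\ci \qub = \sqrt{p_{\bw}} \ketw \ket{0} + \sqrt{1 - p_{\bw}} \ketwp \ket{1}$, which is exactly the input form required by Proposition~\ref{Prop:AA}, with marked amplitude $p = p_{\bw}$, target state $\ket{\varphi} = \ketw$, and orthogonal component $\ket{\varphi^\perp} = \ketwp$. Running amplitude amplification on $\ci$ therefore outputs the desired state $\ketw$, which settles correctness.

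For the running time I would combine two facts already in hand. First, Proposition~\ref{Prop:AA} guarantees that the amplification uses $\bo{1/\sqrt{p_{\bw}}}$ applications of $\ci$ and $\ci^\dagger$ in expectation. Second, Theorem~\ref{Thm:StatePreprop} supplies the lower bound $p_{\bw} \geq K/N$, so $1/\sqrt{p_{\bw}} \leq \sqrt{N/K}$, bounding the expected number of circuit applications by $\bo{\sqrt{N/K}}$. The last step is to convert this count of circuit applications into a query count: each application of $\ci$ (and likewise of $\ci^\dagger$) performs exactly two queries to $\bw$, as recorded in Theorem~\ref{Thm:StatePreprop}, while the remaining gates (the rotation $\rotg_{\hi}$, the indicator $\ind{\hiS}$, and $\di$) make no queries. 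Multiplying the constant per-application cost by the expected number of applications yields the claimed expected query complexity of $\bo{\sqrt{N/K}}$.

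This argument is essentially bookkeeping once the two ingredients are in place, so I do not anticipate a genuine obstacle. The only point deserving care is that the per-call query cost of $\ci$ must be a \emph{constant} independent of $K$ and $N$; this is precisely why the preprocessing phase was engineered so that $\ci$ itself contains only two query gates, with all the $K$-dependent work (the top-$K$ search and the construction of $\di$) absorbed into the one-time preprocessing step rather than repeated inside the amplification loop. I would remark that preparing each subsequent copy of $\ketw$ reuses the same fixed circuit $\ci$, so the $\bo{\sqrt{N/K}}$ bound applies independently to every copy.
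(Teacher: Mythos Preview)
Your proposal is correct and follows exactly the paper's own argument: apply amplitude amplification (Proposition~\ref{Prop:AA}) to $\ci$, use the bound $p_{\bw} \geq K/N$ from Theorem~\ref{Thm:StatePreprop}, and translate the $\bo{\sqrt{N/K}}$ expected circuit applications into queries via the two-queries-per-call cost of $\ci$. The paper states this in a single sentence; your version simply spells out the bookkeeping more carefully.
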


\begin{proof}
  This is a direct application of the amplitude amplification algorithm (Proposition~\ref{Prop:AA}) on~$\ci$, where the complexity is derived from the fact that $\ketw\ket{0}$ has amplitude at least $\sqrt{K/N}$ in $\ket{\psi}$ by Theorem~\ref{Thm:StatePreprop}.
\end{proof}

%####################################################################################################
%####################################################################################################

\section{Discussion}

We did not address the precision errors in our analysis. In particular, it can be relevant to replace the controlled rotation gate (which requires to calculate the arcsine function) by the comparison-based circuit defined in~\cite{SLSB19j} that avoids arithmetic. We also restricted ourselves to preparing states with non-negative real amplitudes. Arbitrary phase factors can be introduced by using the techniques discussed in~\cite{KM01c,SLSB19j}.

\bibliography{Bibliography}

\end{document}